\def\BibTeX{{\rm B\kern-.05em{\sc i\kern-.025em b}\kern-.08em
    T\kern-.1667em\lower.7ex\hbox{E}\kern-.125emX}}
\newtheorem{lemma}{Lemma}
\newtheorem{corollary}{Corollary}
\newcommand{\bmr}[1]{\bm{\mathrm{#1}}}
\newcommand{\NC}[1]{\mathcal{N}_{\mathbb{C}}\left(#1\right)}
\newcommand{\Expectation}[1]{\mathbb{E}\left\{#1\right\}}
\newcommand{\abs}[1]{\left|#1\right|}
\newcommand{\parenth}[1]{\left(#1\right)}
\newcommand{\MADUOscl}{MADUO$^{\text{scl}}$}
\newcommand{\z}{\widehat{\bmr{z}}}
\title{Master-Assisted Distributed Uplink Operation for Cell-Free Massive MIMO Networks}
\name{Andreas Angelou$^{\dagger}$ \qquad Pourya Behmandpoor$^{\star}$ \qquad Marc Moonen$^{\dagger}$ \thanks{This research was carried out at the ESAT Laboratory of KU Leuven, in the frame of Research Project FWO nr.~G0C0623N ``User-centric distributed signal processing algorithms for next generation cell-free massive MIMO based wireless communication networks''. The scientific responsibility is assumed by its authors. \\
}}
\address{$^{\dagger}$ Department of Electrical Engineering (ESAT), STADIUS, KU Leuven, Leuven, Belgium \\
$^{\star}$ Department of Electronics and Informatics, Vrije Universiteit Brussel (VUB), Brussels, Belgium \\
}
\begin{document}

\begin{figure*}[!t]
\centering
\section*{IEEE Copyright Notice}
{\copyright~2026 IEEE. Personal use of this material is permitted. Permission from IEEE must be obtained for all other uses, in any current or future media, including reprinting/republishing this material for advertising or promotional purposes, creating new collective works, for resale or redistribution to servers or lists, or reuse of any copyrighted component of this work in other works.}
\end{figure*}

\clearpage

\topmargin=0mm
\maketitle

\begin{abstract}
Cell-free massive multiple-input-multiple-output is considered a promising technology for the next generation of wireless communication networks. The main idea is to distribute a large number of access points (APs) in a geographical region to serve the user equipments (UEs) cooperatively. In the uplink, one of two types of operations is often adopted: centralized or distributed. In centralized operation, channel estimation and data decoding are performed at the central processing unit (CPU), whereas in distributed operation, channel estimation occurs at the APs and data detection at the CPU. In this paper, we propose a novel uplink operation, termed \emph{Master-Assisted Distributed Uplink Operation} (MADUO), where each UE is assigned a master AP, which receives soft data estimates from the other APs and decodes the data using its local signals and the received data estimates. Numerical experiments demonstrate that the proposed operation performs comparably to the centralized operation and balances fronthaul signaling and computational complexity.
\end{abstract}
\begin{keywords}
Cell-free massive MIMO, distributed processing, receive combining, uplink, spectral efficiency.
\end{keywords}
\section{Introduction}
\label{sec:intro}
Cellular massive multiple-input-multiple-output (MIMO) is a key enabler of 5G networks and has already reached commercial deployment \cite{marzetta2010noncooperative, bjornson2017massive, delson2019survey}. However, it suffers from inter-cell interference, especially for user equipments (UEs) near the cell edges, resulting in a non-uniform quality of service. Cell-free massive MIMO (CFmMIMO) \cite{ngo2015cell} addresses this bottleneck by deploying a large number of access points (APs) in the network to eliminate cell boundaries and cooperatively serve the UEs \cite{ngo2015cell}. CFmMIMO cancels interference \cite{interdonato2019ubiquitous} and provides high-quality service to all UEs \cite{ngo2017cell} in the network. As a result, it has drawn considerable attention and is considered a promising technology for 6G networks \cite{bjornson2019making, zhang2019cell, ngo2024ultradense, zheng2024mobile}.

In the uplink, one of two types of operations is often adopted: centralized or distributed. In centralized operation, APs forward pilot and data signals to the central processing unit (CPU), which performs channel estimation and data detection using channel state information (CSI) from all UEs (global CSI). In distributed operation, the APs use CSI of their served UEs (local CSI) and compute local data estimates (soft estimates) and send these to the CPU for final data detection using large-scale-fading decoding (LSFD). 

Various receive combining schemes exist for both operations. For centralized operation, the optimal Centralized minimum-mean-squared-error (C-MMSE) combiner \cite{bjornson2019making} uses global CSI, whereas the scalable alternative Partial MMSE (P-MMSE) \cite{bjornson2020scalable} offers scalability by suppressing interference from nearby UEs. For distributed operation, the APs use the Local MMSE (L-MMSE) \cite{bjornson2019making} combiner, which requires global CSI, whereas the Local Partial (LP-MMSE) \cite{demir2021foundations} requires only local CSI. In \cite{wang2025optimal}, a bilinear combiner reduces computational complexity by replacing the matrix inversion in MMSE combiners with a statistics-dependent matrix. In \cite{zheng2022team}, three Team MMSE (T-MMSE) combiners are proposed using the Theory of Teams and achieved near-optimal performance with reduced CSI sharing. A hybrid uplink scheme in \cite{kanno2022fronthaul} compresses signals at the APs before forwarding these to the CPU for centralized detection, reducing fronthaul signaling. In \cite{schulz2024scalable}, the LSFD coefficients are computed locally at the APs to further reduce fronthaul signaling.

Despite the efforts to reduce fronthaul signaling and computational load, uplink schemes that effectively balance these factors with performance remain limited. To this end, we propose a novel uplink scheme, termed \textit{Master-Assisted Distributed Uplink Operation} (MADUO), where each UE is assigned a Master AP (MAP) and a set of Additional Serving APs (ASAPs). The ASAPs estimate the signals of their served UEs and forward these, along with compressed CSI, to the MAPs for final data detection. Numerical experiments demonstrate that MADUO achieves near-centralized performance while balancing fronthaul signaling and computational load.

All simulation results can be reproduced using Matlab code available at: \href{https://github.com/AndreasAgg/Master-Assisted-Distributed-Uplink-Operation.git}{https://github.com/AndreasAgg/Master-Assisted-Distributed-Uplink-Operation.git}.

\section{Cell-Free Massive MIMO Model}
\label{sec:system-model}
We consider a CFmMIMO network with $K$ single-antenna UEs and $L$ APs, each equipped with $N$ antennas, with $LN \gg K$. We adopt the user-centric variant where each UE $k$ is served by a subset of APs, denoted by $\mathcal{A}_k$. The subset of UEs served by an AP $j$ is denoted by $\mathcal{U}_j$. The APs are connected to the CPU through ideal fronthaul links. We denote the channel between a UE $k$ and an AP $j$ by $\bmr{h}_{j,k} \in \mathbb{C}^N$. The collective channel of a UE $k$ to all APs is denoted by $\bmr{h}_k = \begin{bmatrix}
    \bmr{h}_{1,k}^T & \dots & \bmr{h}_{L,k}^T
\end{bmatrix}^T  \in \mathbb{C}^{LN \times 1}$, and the channels of all UEs to  an AP $j$ are collected in $\bmr{H}_j = \begin{bmatrix} \bmr{h}_{j,1} & \dots & \bmr{h}_{j,K} \end{bmatrix} \in \mathbb{C}^{N\times K}$. The channels are assumed to be constant over coherence blocks of $\tau_c$ samples. The pilot and uplink data transmission phases use $\tau_p$ and $\tau_u=\tau_c-\tau_p$ samples, respectively. 

\subsection{Channel Model and Channel Estimation}
We model the channel between an AP $j$ and a UE $k$ as a correlated Rayleigh fading channel $\bmr{h}_{j,k} \sim \mathcal{N}_{\mathbb{C}}\left(\bmr{0}_N, \bm{\mathrm{R}}_{j,k}\right)$, i.e., a zero-mean circularly symmetric complex Gaussian random vector with correlation matrix $\bm{\mathrm{R}}_{j,k} \in \mathbb{C}^{N\times N}$, which captures the large-scale fading. The correlation matrices are assumed constant over a large number of coherence blocks \cite{van2022gevd, van2023distributed}. 

We assume that each UE is assigned one $\tau_p$-samples pilot signal out of $\tau_p$ mutually orthogonal pilot signals; a pilot assignment algorithm is described in \cite[Sec.~V.A]{bjornson2020scalable}. We assume that the APs know the pilots of the UEs. In the pilot transmission phase, AP $j$ despreads its received signal with the pilot signal of UE $k$ to obtain the MMSE channel estimate \cite{bjornson2020scalable}
\begin{equation}
    \label{eq:channel-est}
    \widehat{\bmr{h}}_{j,k} = \sqrt{p_k \tau_p}\ \bm{\mathrm{R}}_{j,k} \bm{\Psi}_{j,t_k}^{-1} \bm{\mathrm{y}}_{j,t_k}^{\text{pilot}}\, ,
\end{equation}
where $
    \bm{\mathrm{y}}_{j,t_k}^{\text{pilot}} = \sum_{i\in\mathcal{P}_k} \sqrt{p_i \tau_p} \bmr{h}_{j,i} + \bm{\mathrm{n}}_{j,t_k}
$ is the received signal after despreading, with $\bm{\mathrm{n}}_{j,t_k} \sim \mathcal{N}_{\mathbb{C}}(\bm{0}_N, \sigma^2\bm{\mathrm{I}}_N)$, $t_k$ is the pilot index of UE $k$, $\mathcal{P}_k$ is the set of UEs with the same pilot signal as UE $k$ (including UE $k$ itself), $p_i$ is the transmission power of UE $i$, and $\bm{\Psi}_{j,t_k} \coloneqq \sum_{i\in\mathcal{P}_k} p_i\tau_p \bm{\mathrm{R}}_{j,i} + \sigma^2\bm{\mathrm{I}}_N$. We note that the channel estimation error $\widetilde{\bmr{h}}_{j,k} = \bmr{h}_{j,k} - \widehat{\bmr{h}}_{j,k}$ is independent of the channel estimate in \eqref{eq:channel-est} and is distributed as  $\widetilde{\bmr{h}}_{j,k} \sim \NC{\bm{0}_N, \bm{\mathrm{C}}_{j,k}}$, where $\bm{\mathrm{C}}_{j,k} =  \bm{\mathrm{R}}_{j,k} - p_k \tau_p \bm{\mathrm{R}}_{j,k} \bm{\Psi}_{j,t_k}^{-1} \bm{\mathrm{R}}_{j,k}$. Note that AP $j$ requires knowledge of the correlation matrices $\left\{\bm{\mathrm{R}}_{j,i} \right\}_{i \in \mathcal{P}_k}$ for channel estimation, which are assumed to be available; see \cite{bjornson2016massive, neumann2018covariance, upadhya2018covariance} for methods to estimate them. The channel estimates are collected into a matrix $\widehat{\bmr{H}}_j = \begin{bmatrix} \widehat{\bmr{h}}_{j,1} & \dots & \widehat{\bmr{h}}_{j,K} \end{bmatrix} \in \mathbb{C}^{N\times K}$.

\subsection{Data Transmission and Receive Combining} \label{subsec:Data-Transmission-and-Receive-Combining}
In the data transmission phase, the UEs send their data signals to the APs. The received signal $\bm{\mathrm{y}}_j \in \mathbb{C}^N$ for AP $j$ is 
\begin{equation} \label{eq:received-signal}
    \bm{\mathrm{y}}_j = \sum\limits_{i=1}^K \bmr{h}_{j,i}s_i + \bm{\mathrm{n}}_j \, ,
\end{equation}
where $s_i \sim \NC{0,p_i}$ denotes the data signal of UE $i$, and $\bm{\mathrm{n}}_j \sim \NC{\bm{0}_N,\sigma^2\bm{\mathrm{I}}_N}$ is the independent receiver noise. Here, we define the matrix $\bm{\mathrm{D}}_{j,k}$ which is the identity $\bm{\mathrm{I}}_N$ if UE $k$ is served by AP $j$ and all zeros matrix $\bm{0}_{N\times N}$ otherwise, and $\bm{\mathrm{D}}_k=\text{blkdiag}\left\{\bm{\mathrm{D}}_{j,k}\right\}_{j=1}^L$, where $\text{blkdiag}\{\bmr{X}_i\}_{i=1}^N$ denotes a block-diagonal matrix with blocks $\bmr{X}_1, \dots,\bmr{X}_N$.

In \textbf{centralized operation}, the APs forward their received signals to the CPU, which concatenates them to obtain a long vector $\bm{\mathrm{y}}  = \sum_{i=1}^K \bmr{h}_i s_i + \bm{\mathrm{n}} \in \mathbb{C}^{LN}$, where $\bm{\mathrm{y}}=\begin{bmatrix} \bm{\mathrm{y}}_1^T & \dots & \bm{\mathrm{y}}_L^T\end{bmatrix}^T$ and $\bm{\mathrm{n}}=\begin{bmatrix} \bm{\mathrm{n}}_1^T & \dots & \bm{\mathrm{n}}_L^T\end{bmatrix}^T$. The CPU then estimates the signal $s_k$ using the combiner $\bm{\mathrm{D}}_k\bm{\mathrm{v}}_k$ and obtains the signal estimate 
$
    \hat{s}_k = \bm{\mathrm{v}}_k^H \bm{\mathrm{D}}_k \bm{\mathrm{y}}
$.
The optimal linear combiner is the C-MMSE, which is defined in \cite[Eq.~(13)]{bjornson2019making} and uses global CSI, which makes it non-scalable as $K$ grows.\footnotemark \ A scalable alternative is to use CSI only from UEs that are served by the same APs as UE $k$, denoted by $\mathcal{S}_k$, to obtain the P-MMSE combiner
$\bm{\mathrm{v}}_k^\text{P-MMSE} = p_k \big( \sum_{i\in \mathcal{S}_k} p_i \bm{\mathrm{D}}_k (\widehat{\bmr{h}}_i \widehat{\bmr{h}}_i^H + \bm{\mathrm{C}}_i) \bm{\mathrm{D}}_k + \sigma^2 \bm{\mathrm{I}}_{LN}\big)^{-1} \bm{\mathrm{D}}_k \widehat{\bmr{h}}_{k}\, ,
$
where $\bmr{C}_i = \text{blkdiag}\{ \bmr{C}_{j,i}\}_{j=1}^L$ and $\widehat{\bmr{h}}_i = \begin{bmatrix}
    \widehat{\bmr{h}}_{1,i}^T & \dots &\widehat{\bmr{h}}_{L,i}^T
\end{bmatrix}^T$.
\footnotetext{We adopt the definition of scalability as presented in \cite[Definition 1]{bjornson2020scalable}.} 

In \textbf{distributed operation}, APs use a local receive combiner $\bm{\mathrm{D}}_{j,k} \bm{\mathrm{v}}_{j,k}$ to obtain the soft local estimates
\begin{equation} \label{eq:local-signal-estimates}
    \hat{s}_{j,k} = \bm{\mathrm{v}}_{j,k}^H \bm{\mathrm{D}}_{j,k} \bm{\mathrm{y}}_j\, .
\end{equation}
The use of $\bm{\mathrm{D}}_{j,k}$ ensures that AP $j$ only estimates the signals of the UEs $k\in\mathcal{U}_j$. APs obtain the signal estimate \eqref{eq:local-signal-estimates} using L-MMSE given by \cite[Eq.~(16)]{bjornson2019making}, which uses the estimated global CSI, i.e., $\{\widehat{\bmr{h}}_{j,i}\}_{i=1}^K$. A scalable alternative is the LP-MMSE
$
    \bm{\mathrm{v}}_{j,k}^{\text{LP-MMSE}} = p_k ( \sum_{i\in \mathcal{U}_j} p_i (\widehat{\bmr{h}}_{j,i} \widehat{\bmr{h}}_{j,i}^H + $ $\bm{\mathrm{C}}_{j,i}) + \sigma^2 \bm{\mathrm{I}}_N)^{-1} \bm{\mathrm{D}}_{j,k} \widehat{\bmr{h}}_{j,k}\, ,
$
which uses only local estimated CSI, i.e., $\{\widehat{\bmr{h}}_{j,i}\}_{i\in \mathcal{U}_j}$. The estimates are then passed on to the CPU, which combines them to obtain the final data estimate 
$
    \hat{s}_k = \sum_{j=1}^L a_{j,k}^* \hat{s}_{j,k}
$.
For this, the CPU computes the optimal weights $\{a_{j,k}\}_{j=1}^L$ in a non-scalable manner using \cite[Eq.~(5.30)]{demir2021foundations} or the nearly optimal weights in a scalable manner using \cite[Eq.~(5.41)]{demir2021foundations}.

\section{Master-Assisted Distributed Uplink Operation} \label{sec:MADUO}
We propose a novel uplink operation for CFmMIMO networks, called \textit{Master-Assisted Distributed Uplink Operation} (MADUO). In MADUO, every UE is assigned a Master AP (MAP). We denote the MAP of a UE $k$ by $l_k$ or $l$ when the UE in question is easily inferred. Note that $l_k \in \mathcal{A}_k$. We call \textit{Additional Serving APs} (ASAPs) of UE $k$ the APs in the set $\mathcal{A}_k \backslash  \{l_k\}$. For simplicity, we assume that $p_k=p$ for all $k$, although the analysis can be easily extended for UEs with different powers. In MADUO, the APs in $\mathcal{A}_k$ estimate the channels of all UEs in the network. ASAPs estimate the data signal of UE $k$ using the L-MMSE combiner $\bmr{D}_{j,k}\bmr{v}_{j,k}$ and then they send their soft data estimates $\hat{s}_{j,k}$, their fused channel estimates $\{ \bmr{v}_{j,k}^H \bmr{D}_{j,k} \widehat{\bmr{h}}_{j,i} \}_{i=1}^K$ and the Hermitian product 
\begin{equation} \label{eq:Herm}
    \mu_{j,k} = \bmr{v}_{j,k}^H \bmr{D}_{j,k}  \parenth{p\sum_{i=1}^K \bmr{C}_{j,i} + \sigma^2\bmr{I}_N}  \bmr{D}_{j,k} \bmr{v}_{j,k} 
\end{equation}
to MAP $l_k$. The Hermitian product in \eqref{eq:Herm} depends on the sum of correlation matrices associated with the channel estimation errors, and therefore reflects the quality of the channel estimates at AP $j$. Lastly, MAP $l_k$ estimates the data signal of UE $k$ using its local signals $\bmr{y}_{l_k}$ and the signal estimates $\hat{s}_{j,k}$ from the ASAPs as described in \hyperref[subsec:RC]{Section \ref{subsec:RC}}. We note that whenever we estimate the data signal of UE $k$, the APs in $\mathcal{A}_k$ are connected through fronthaul with a star topology, with the MAP acting as the central node connected to all ASAPs.

\subsection{Master AP Receive Combining}
\label{subsec:RC}
The MAP $l$ of UE $k$ has access to its own local signals $\bmr{y}_l$ and the signal estimates from the ASAPs $\{\hat{s}_{j,k}\}_{j \neq l}$. Hence, it has access to the $(N+L-1)$-dimensional vector
\begin{equation}
    \widetilde{\bmr{y}}_l = \begin{bmatrix}
        \bmr{y}_l^T &
        \hat{s}_{1,k} &
        \dots &
        \hat{s}_{l-1,k} &
        \hat{s}_{l+1,k} &
        \dots &
        \hat{s}_{L,k}
    \end{bmatrix}^T\, ,
\end{equation}
with $\bmr{y}_l$ given by \eqref{eq:received-signal} and the signal estimates given by \eqref{eq:local-signal-estimates}. The receive combining vector of the MAP is 
\begin{equation} \label{eq:MADUO-RC-init}
    \bmr{v}_k = \begin{bmatrix}
        \bmr{v}_{l,k}^T & \bmr{a}_k^T
    \end{bmatrix}^T\, ,
\end{equation}
where $\bmr{v}_{l,k} \in \mathbb{C}^N$ will combine the local signals $\bmr{y}_l$, and $\bmr{a}_k = \begin{bmatrix}
    a_{1,k} & \dots & a_{l-1,k} & a_{l+1,k} & \dots & a_{L,k}
\end{bmatrix}^T \in \mathbb{C}^{L-1}$ will act on the signal estimates. The final signal estimate for UE $k$ is 
\begin{equation} \label{eq:MADUO-signal-estimate}
    \hat{s}_k = \bmr{v}_k^H \widetilde{\bmr{y}}_l = \bmr{v}_{l,k}^H \bmr{y}_l + \sum\limits_{j=1,\ j\neq l}^L a_{j,k}^* \hat{s}_{j,k}\, .
\end{equation}
We aim to compute the receive combiner $\bmr{v}_k$ in \eqref{eq:MADUO-RC-init} by maximizing the achievable spectral efficiency (SE).

\begin{lemma} \label{lemma}
    An achievable SE of UE $k$ in MADUO is
    \begin{equation} \label{eq:MADUO-SE}
        \text{SE}_k = \frac{\tau_u}{\tau_c} \Expectation{\log_2\left( 1 + \text{SINR}_k\right)}\, ,
    \end{equation}
    where the expectation is with respect to channel realizations and the effective SINR is 
    \begin{equation} \label{eq:MADUO-SINR}
        \text{SINR}_k = \frac{p \abs{\bmr{v}_k^H \z_{l,kk}}^2}{\bmr{v}_k^H \bmr{B}_k\bmr{v}_k}\, ,
    \end{equation}
    with $\z_{l,kk} =\begin{bmatrix} \widehat{\bmr{h}}_{l,k} \\ \widehat{\bmr{g}}_{kk} \end{bmatrix}$, where
    $
    \widehat{\bmr{g}}_{kk}[r,:] = \bmr{v}_{r,k}^H \bmr{D}_{r,k} \widehat{\bmr{h}}_{r,k}
    $
    is the $r$-th row of $\widehat{\bmr{g}}_{kk}\in \mathbb{C}^{L-1}$ (the row $r=l$ is excluded). Moreover, 
\begin{equation}
\begin{split} \label{eq:B}
    \bmr{B}_k & = \\ & \begin{bmatrix}
        p \parenth{
        \widehat{\bmr{H}}_{l,-k}\widehat{\bmr{H}}_{l,-k}^H + \sum\limits_{i=1}^K \bmr{C}_{l,i}
        } + \sigma^2 \bmr{I}_N & 
        p\widehat{\bmr{H}}_{l,-k} \widehat{\bmr{G}}_k^H \\
        p\widehat{\bmr{G}}_k \widehat{\bmr{H}}_{l,-k}^H &
        p \widehat{\bmr{G}}_k\widehat{\bmr{G}}_k^H + \bmr{F}_k
    \end{bmatrix}
\end{split}\, 
\end{equation}
    where $\widehat{\bmr{H}}_{l,-k}\in \mathbb{C}^{N\times(K-1)}$ is the same as $\widehat{\bmr{H}}_{l}$ but without the column $\widehat{\bmr{h}}_{l,k}$.
    $\widehat{\bmr{G}}_k \in \mathbb{C}^{(L-1) \times (K-1)}$ with
    $\widehat{\bmr{G}}_k[r,:] = \bmr{v}_{r,k}^H \bmr{D}_{r,k} \widehat{\bmr{H}}_{r,-k}$
(the row $r=l$ is excluded) and $\bmr{F}_k = \text{diag}\{\mu_{j,k}\}_{j\neq l} \in \mathbb{C}^{(L-1)\times (L-1)}$, where $\text{diag}\{x_i\}_{i=1}^N$ denotes a diagonal matrix with diagonal element $x_1,\dots,x_N$.
\end{lemma}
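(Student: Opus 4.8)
The plan is to derive \eqref{eq:MADUO-SE}--\eqref{eq:MADUO-SINR} from the standard capacity lower bound for a Gaussian-input channel whose gain is deterministic after conditioning on the channel estimates, with the residual interference-plus-noise treated as worst-case uncorrelated Gaussian noise (the same argument underlying the centralized SE expression in \cite{bjornson2019making, demir2021foundations}). The first thing to note is that this conditioning is legitimate for MADUO: from the forwarded quantities the MAP $l$ knows $\widehat{\bmr{H}}_l$, the fused channel estimates $\{\bmr{v}_{j,k}^H\bmr{D}_{j,k}\widehat{\bmr{h}}_{j,i}\}_{i,j\neq l}$, the scalars $\{\mu_{j,k}\}_{j\neq l}$, and its own correlation matrices; hence it knows both $\z_{l,kk}$ and $\bmr{B}_k$ in \eqref{eq:MADUO-SINR}--\eqref{eq:B}. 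Conditioning the mutual information $I(s_k;\widetilde{\bmr{y}}_l)$ on this side information and lower-bounding the conditional term by $\log_2(1+\text{SINR}_k)$, then averaging over channel realizations and inserting the pre-log factor $\tau_u/\tau_c$, will give \eqref{eq:MADUO-SE}.

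First I would expand $\hat s_k=\bmr{v}_k^H\widetilde{\bmr{y}}_l$ from \eqref{eq:MADUO-signal-estimate} using \eqref{eq:received-signal}, \eqref{eq:local-signal-estimates}, and the decomposition $\bmr{h}_{j,i}=\widehat{\bmr{h}}_{j,i}+\widetilde{\bmr{h}}_{j,i}$. Collecting the coefficient of $s_k$ that depends only on the estimates yields $\bmr{v}_{l,k}^H\widehat{\bmr{h}}_{l,k}+\sum_{j\neq l}a_{j,k}^*\bmr{v}_{j,k}^H\bmr{D}_{j,k}\widehat{\bmr{h}}_{j,k}=\bmr{v}_k^H\z_{l,kk}$, the numerator of \eqref{eq:MADUO-SINR}. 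The remaining terms form the effective noise $w_k:=\hat s_k-\bmr{v}_k^H\z_{l,kk}\,s_k$, comprising (i) the desired-UE estimation error $\bmr{v}_{l,k}^H\widetilde{\bmr{h}}_{l,k}s_k+\sum_{j\neq l}a_{j,k}^*\bmr{v}_{j,k}^H\bmr{D}_{j,k}\widetilde{\bmr{h}}_{j,k}s_k$, (ii) interference from all UEs $i\neq k$ (both the coherent part through the estimates and the part through the errors), and (iii) receiver noise at $l$ and at the ASAPs. Using that the estimation errors are independent of the estimates, that the data symbols $\{s_i\}$ are independent and independent of the estimates, and that the noise is independent of everything, one checks $\Expectation{w_k\mid\text{estimates},s_k}=0$ and $\Expectation{w_k s_k^*\mid\text{estimates}}=0$, so $w_k$ is a valid uncorrelated effective noise and the bound applies with $\text{SINR}_k = p\,\abs{\bmr{v}_k^H\z_{l,kk}}^2/\Expectation{\abs{w_k}^2\mid\text{estimates}}$.

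The core computation is then showing $\Expectation{\abs{w_k}^2\mid\text{estimates}}=\bmr{v}_k^H\bmr{B}_k\bmr{v}_k$. I would split the conditional covariance of $w_k$ (viewed as $\bmr{v}_k^H(\cdot)\bmr{v}_k$ of an $(N{+}L{-}1)\times(N{+}L{-}1)$ matrix) into: the contribution of (i), block-diagonal with top block $p\bmr{C}_{l,k}$ and $r$-th diagonal entry $p\,\bmr{v}_{r,k}^H\bmr{D}_{r,k}\bmr{C}_{r,k}\bmr{D}_{r,k}\bmr{v}_{r,k}$; the contribution of each interferer $i\neq k$, which after conditioning equals $p\,\z_{l,ki}\z_{l,ki}^H$ (with $\z_{l,ki}$ the analog of $\z_{l,kk}$ for UE $i$) plus the same block-diagonal pattern with $\bmr{C}_{\cdot,i}$; and the noise contribution, block-diagonal with $\sigma^2\bmr{I}_N$ on top and $\sigma^2\bmr{v}_{r,k}^H\bmr{D}_{r,k}\bmr{D}_{r,k}\bmr{v}_{r,k}$ on the diagonal. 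Summing the error-and-noise pieces over all $i$ collapses the top block to $p\sum_{i}\bmr{C}_{l,i}+\sigma^2\bmr{I}_N$ and each ASAP diagonal entry to $\mu_{r,k}$ by definition \eqref{eq:Herm}, i.e. to $\bmr{F}_k$; summing the coherent pieces $p\sum_{i\neq k}\z_{l,ki}\z_{l,ki}^H$ produces precisely the blocks $p\widehat{\bmr{H}}_{l,-k}\widehat{\bmr{H}}_{l,-k}^H$, $p\widehat{\bmr{H}}_{l,-k}\widehat{\bmr{G}}_k^H$ and $p\widehat{\bmr{G}}_k\widehat{\bmr{G}}_k^H$, which reassembles \eqref{eq:B}.

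The step I expect to be the main obstacle is the bookkeeping in this covariance computation: (a) verifying that the estimation-error and noise contributions carry no off-diagonal blocks, which rests on independence of channel estimation errors and of receiver noise across distinct APs; (b) correctly generating the cross-block $p\widehat{\bmr{H}}_{l,-k}\widehat{\bmr{G}}_k^H$ from the correlation between the MAP's own signal $\bmr{y}_l$ and the forwarded estimates $\{\hat s_{j,k}\}_{j\neq l}$; and (c) keeping track throughout of the two exclusions, the column $i=k$ in $\widehat{\bmr{H}}_{l,-k},\widehat{\bmr{G}}_k$ and the row $r=l$ in $\widehat{\bmr{g}}_{kk},\widehat{\bmr{G}}_k,\bmr{F}_k$, since the MAP does not forward $\hat s_{l,k}$ to itself. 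Once $\bmr{B}_k$ is established, the capacity bound together with the factor $\tau_u/\tau_c$ (the fraction of each $\tau_c$-sample coherence block used for uplink data) yields \eqref{eq:MADUO-SE}, completing the proof.
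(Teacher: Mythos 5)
Your proposal is correct and follows essentially the same route as the paper: the paper's proof expands $\hat{s}_k$ via \eqref{eq:received-signal}, \eqref{eq:local-signal-estimates} and $\bmr{h}_{j,i}=\widehat{\bmr{h}}_{j,i}+\widetilde{\bmr{h}}_{j,i}$, identifies the effective channel $\bmr{v}_k^H\z_{l,kk}$ and the residual interference, and invokes the standard conditional capacity lower bound of \cite[Lem.~3.5]{demir2021foundations} together with the same independence properties you use. The only difference is that you unpack that cited lemma (the worst-case uncorrelated-noise bound conditioned on the estimates) and carry out the covariance bookkeeping for $\bmr{B}_k$ explicitly, which the paper leaves implicit.
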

\begin{proof}
We first expand \eqref{eq:MADUO-signal-estimate} using \eqref{eq:received-signal}, \eqref{eq:local-signal-estimates}, and $\bmr{h}_{j,k} = \widetilde{\bmr{h}}_{j,k} + \widehat{\bmr{h}}_{j,k}$, and then we use \cite[Lem.~3.5]{demir2021foundations} by identifying the effective channel and the interference. We make use of the independence of the channel estimates and estimation errors, and the independence of the data signals to derive \eqref{eq:MADUO-SINR}.
\end{proof}
\noindent
Since \eqref{eq:MADUO-SINR} has the form of a generalized Rayleigh quotient, the receive combiner that maximizes \eqref{eq:MADUO-SINR} is given as follows:
\begin{corollary}
    The effective SINR in \eqref{eq:MADUO-SINR} is maximized by 
    \begin{equation} \label{eq:MADUO-RC}
             \bmr{v}_k = \bmr{B}_k ^{-1} \widehat{\bmr{z}}_{l,kk}\, .
        \end{equation}
\end{corollary}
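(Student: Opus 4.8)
The plan is to recognize \eqref{eq:MADUO-SINR} as a generalized Rayleigh quotient in the full vector $\bmr{v}_k \in \mathbb{C}^{N+L-1}$ and to maximize it by the standard Cauchy--Schwarz argument, once it has been checked that $\bmr{B}_k$ is Hermitian positive definite --- so that $\bmr{B}_k^{-1}$ and a Hermitian positive definite square root $\bmr{B}_k^{1/2}$ both exist.

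First I would establish that $\bmr{B}_k$ is positive definite. Collecting the blocks of \eqref{eq:B} gives
\begin{equation*}
    \bmr{B}_k = p \begin{bmatrix} \widehat{\bmr{H}}_{l,-k} \\ \widehat{\bmr{G}}_k \end{bmatrix} \begin{bmatrix} \widehat{\bmr{H}}_{l,-k} \\ \widehat{\bmr{G}}_k \end{bmatrix}^H + \begin{bmatrix} p\sum_{i=1}^K \bmr{C}_{l,i} + \sigma^2 \bmr{I}_N & \bm{0} \\ \bm{0} & \bmr{F}_k \end{bmatrix}\, .
\end{equation*}
The first term is Hermitian positive semidefinite; in the block-diagonal second term the top-left block is positive definite because $\sigma^2 > 0$ and each $\bmr{C}_{l,i} \succeq \bm{0}$, and $\bmr{F}_k = \text{diag}\{\mu_{j,k}\}_{j\neq l}$ has strictly positive diagonal entries since, from \eqref{eq:Herm}, $\mu_{j,k} \geq \sigma^2 \norm{\bmr{D}_{j,k}\bmr{v}_{j,k}}^2 > 0$ for every ASAP $j$ (the combiner $\bmr{D}_{j,k}\bmr{v}_{j,k}$ being nonzero by construction). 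Hence $\bmr{B}_k \succ \bm{0}$. Equivalently, one can observe that $\bmr{B}_k$ is the covariance matrix of the interference-plus-noise part of $\widetilde{\bmr{y}}_l$ conditioned on the channel estimates, which is automatically Hermitian positive semidefinite, and then obtain positive definiteness from the same noise-floor terms.

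Then, writing $\bmr{u} = \bmr{B}_k^{1/2}\bmr{v}_k$ (a bijection of $\mathbb{C}^{N+L-1}$ onto itself), \eqref{eq:MADUO-SINR} becomes
\begin{equation*}
    \text{SINR}_k = p\,\frac{\abs{\bmr{u}^H \bmr{B}_k^{-1/2}\z_{l,kk}}^2}{\norm{\bmr{u}}^2}\, .
\end{equation*}
By Cauchy--Schwarz, $\abs{\bmr{u}^H \bmr{B}_k^{-1/2}\z_{l,kk}}^2 \leq \norm{\bmr{u}}^2 \norm{\bmr{B}_k^{-1/2}\z_{l,kk}}^2$, with equality if and only if $\bmr{u}$ is a scalar multiple of $\bmr{B}_k^{-1/2}\z_{l,kk}$; hence $\text{SINR}_k \leq p\, \z_{l,kk}^H \bmr{B}_k^{-1}\z_{l,kk}$, and this bound is attained exactly when $\bmr{u} \propto \bmr{B}_k^{-1/2}\z_{l,kk}$, i.e.\ $\bmr{v}_k \propto \bmr{B}_k^{-1}\z_{l,kk}$. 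Because \eqref{eq:MADUO-SINR} is invariant under the rescaling $\bmr{v}_k \mapsto c\,\bmr{v}_k$ for any $c\neq 0$, the particular choice \eqref{eq:MADUO-RC} is a maximizer, which proves the corollary. The only step requiring care is the positive-definiteness of $\bmr{B}_k$ --- concretely $\bmr{F}_k \succ \bm{0}$, which rests on $\bmr{D}_{j,k}\bmr{v}_{j,k} \neq \bm{0}$ for ASAPs together with $\sigma^2 > 0$; everything after that is the textbook Rayleigh-quotient maximization.
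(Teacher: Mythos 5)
Your proof is correct and follows essentially the same route as the paper: the paper simply observes that \eqref{eq:MADUO-SINR} is a generalized Rayleigh quotient and invokes \cite[Lem.~B.10]{bjornson2017massive}, whose standard proof is exactly the change of variables $\bmr{u}=\bmr{B}_k^{1/2}\bmr{v}_k$ followed by Cauchy--Schwarz that you carry out explicitly. Your additional verification that $\bmr{B}_k\succ\bm{0}$ (via the outer-product-plus-block-diagonal decomposition and the noise-floor terms in $\mu_{j,k}$) is a hypothesis the paper leaves implicit, and your restriction of the argument to the ASAPs $j\in\mathcal{A}_k\setminus\{l_k\}$ --- for which $\bmr{D}_{j,k}\bmr{v}_{j,k}\neq\bm{0}$ --- is the right reading, consistent with the stated complexity $\mathcal{O}\parenth{(N+\abs{\mathcal{A}_k}-1)^3}$.
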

\begin{proof}
    The proof follows from \cite[Lem.~B.10]{bjornson2017massive} since we maximize a generalized Rayleigh quotient with respect to $\bmr{v}_k$. 
\end{proof}

\subsection{Scalable MADUO}
To make the proposed operation scalable, we make the following modifications. APs in $\mathcal{A}_k$ estimate only the channels of their served UEs. ASAPs use LP-MMSE instead of L-MMSE for their soft data estimates $\hat{s}_{j,k}$, and send only the fused channel estimates of their served UEs $\{ \bmr{v}_{j,k}^H \bmr{D}_{j,k} \widehat{\bmr{h}}_{j,i} \}_{i\in \mathcal{U}_j}$ along with
\begin{equation}
    \mu_{j,k}^{\prime} = \bmr{v}_{j,k}^H \bmr{D}_{j,k}  \parenth{p\sum_{i\in\mathcal{U}_j} \bmr{C}_{j,i} + \sigma^2\bmr{I}_N}  \bmr{D}_{j,k} \bmr{v}_{j,k} 
\end{equation}
where the sum includes only the UEs $i \in \mathcal{U}_j$. Lastly, the MAP uses only the channel estimates of its served UEs and the information from the ASAPs. Thus, the sum in the top-left block of \eqref{eq:B} involves only the UEs $i\in \mathcal{U}_l$ and $\widehat{\bmr{H}}_{l,-k}$ contains non-zero columns only for UEs $i\in \mathcal{U}_l \backslash \{k\}$. Hereafter, the scalable MADUO will be denoted by \MADUOscl.

\subsection{Fronthaul Signaling and Computational Complexity}
The number of complex scalars exchanged through the fronthaul for MADUO and \MADUOscl \ per coherence block is
\begin{equation*}
    \sum_{j=1}^L \left(\tau_u + \Omega + 1\right) \left(\left|\mathcal{U}_j\right| - \left| \mathcal{U}^{\text{master}}_j \right| \right)\, ,
\end{equation*}
where $\Omega=K$ for MADUO and $\Omega=\abs{\mathcal{U}_j}$ for \MADUOscl, and $\mathcal{U}^{\text{master}}_j$ is the set of UEs for which AP $j$ acts as MAP.

The number of complex multiplications for the computation of the receive combiner in \eqref{eq:MADUO-RC} is $\mathcal{O}\parenth{(N+\abs{\mathcal{A}_k}-1)^3}$ for MADUO and \MADUOscl. However, the number of computations in MADUO is directly dependent on $K$, since the APs estimate and compress the channels to all $K$ UEs.

\section{Numerical Experiments}
We compare MADUO and \MADUOscl \ with the centralized and distributed operation in terms of SE performance, fronthaul signaling, and computational complexity, using numerical simulations. For the centralized operation, we consider C-MMSE \cite{bjornson2019making} and P-MMSE combining \cite{bjornson2020scalable}. For the distributed operation, we consider L-MMSE \cite{bjornson2019making} and LP-MMSE \cite{demir2021foundations}. We consider a CFmMIMO network with $L=100$ APs, $N=4$ antennas per AP, and varying $K$, in a simulation square area of $2\times2 \ \text{km}^2$. The network has a wrap-around topology, as explained in \cite[Sec.~5.3]{demir2021foundations}. The cluster formation, master AP assignment, and pilot assignment are jointly implemented, as described in \cite[Sec.~V.A]{bjornson2020scalable}. Moreover, we use the same propagation model as in \cite[Sec.~5.3]{bjornson2017massive}, and $\tau_c=200$, $\tau_p=10$, $p_k=p=100$ mW. 

\begin{figure}[!t]
\centering
\def\svgwidth{0.93\columnwidth}
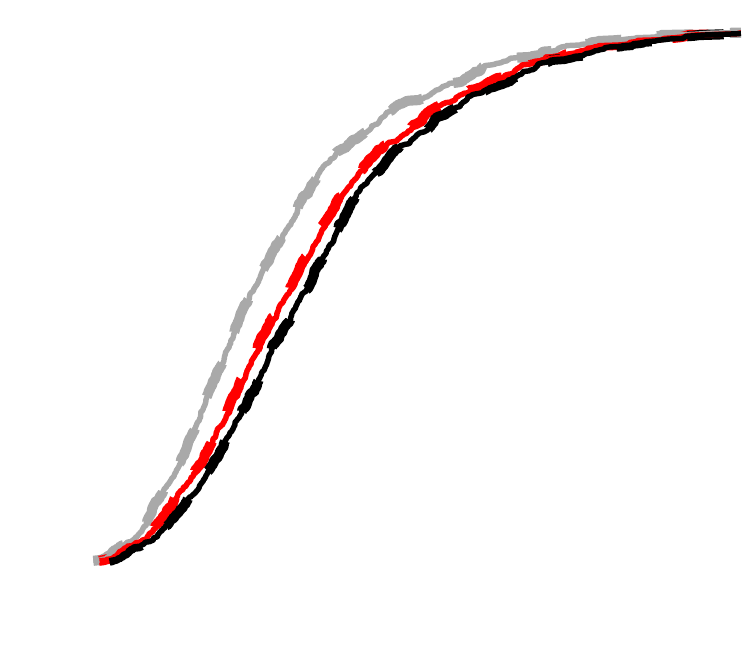
\caption{Cumulative distribution function (CDF) of the per-UE SE for $L=100$, $N=4$, and $K=40$. MADUO and \MADUOscl \ perform comparably to the centralized operation.}
\label{fig:scenario}
\end{figure}
In \hyperref[fig:scenario]{Fig.~\ref{fig:scenario}}, we plot the cumulative distribution function (CDF) of the per-UE SE. \MADUOscl\ matches the performance of MADUO, and both significantly outperform the distributed operation and closely approach the centralized operation. In \hyperref[fig:fronthaul]{Fig.~\ref{fig:fronthaul}}, we plot the number of complex scalars sent through fronthaul per coherence block for varying $K$. For $K>60$, \MADUOscl \ reduces the fronthaul signaling compared to the distributed operation as $K$ increases, whereas MADUO requires the most fronthaul signaling. In \hyperref[fig:computation]{Fig.~\ref{fig:computation}}, we plot the number of complex multiplications required to estimate the signal of one UE per coherence block, for varying $K$. MADUO and \MADUOscl \ require fewer computations than the centralized operation. Interestingly, the number of computations of \MADUOscl \ decreases as $K$ grows. The reason is that fewer APs will serve an arbitrary UE $k$ as $K$ increases, i.e., $\abs{\mathcal{A}_k}$ will decrease with $K$ \cite{demir2021foundations}.

\begin{figure}[!t]
\centering
\def\svgwidth{0.93\columnwidth}
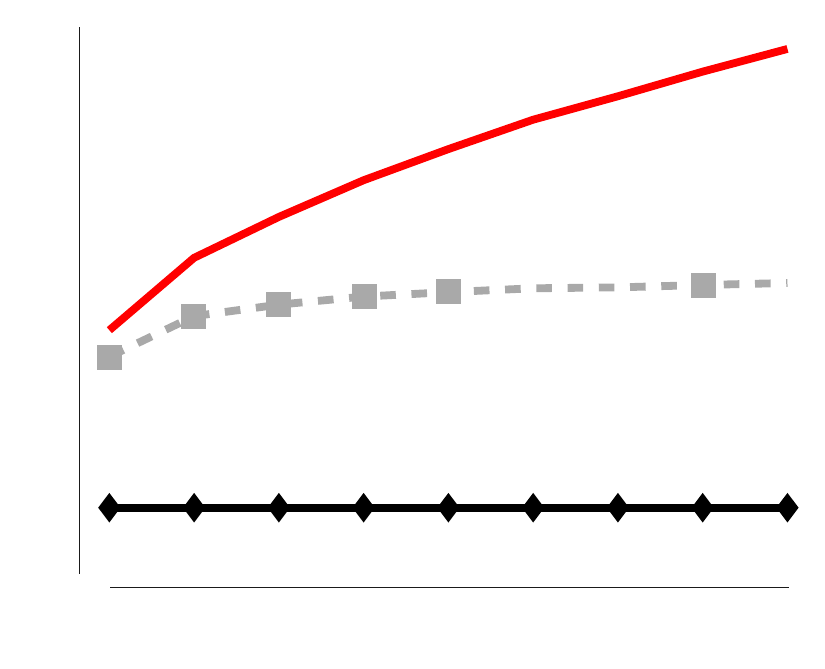
\caption{Fronthaul signaling for $L=100$, $N=4$, and varying $K$. For $K>60$, \MADUOscl \ requires less fronthaul signaling than the distributed operation.}
\label{fig:fronthaul}
\end{figure}

\begin{figure}[!t]
\centering
\def\svgwidth{0.93\columnwidth}
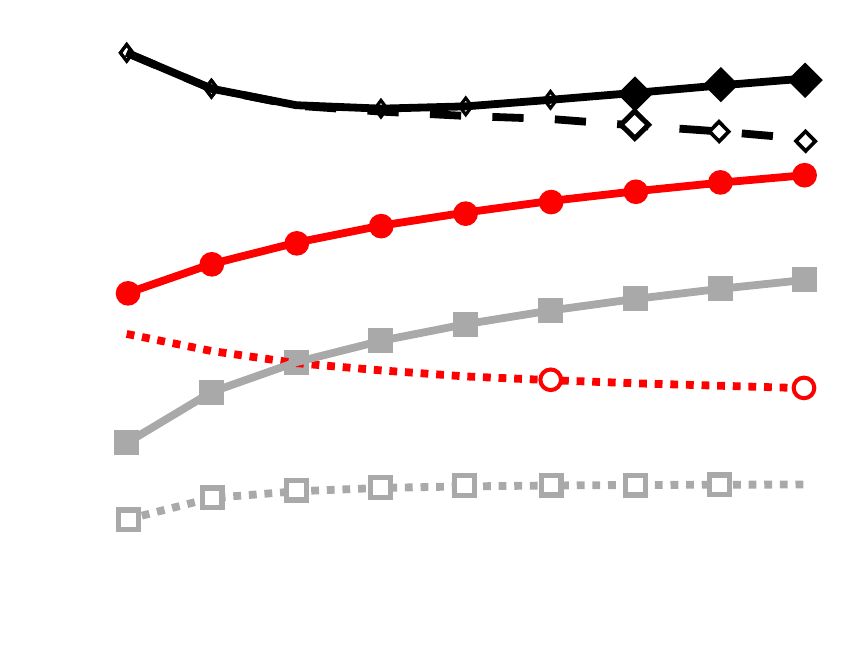
\caption{Number of complex multiplications for the service of one UE for varying $K$, with $L=100$ and $N=4$. The number of computations of \MADUOscl \ decreases with $K$, approaching that of the distributed operation.}
\label{fig:computation}
\end{figure}

\section{Conclusion}
\label{sec:conclusion}
In this paper, we have proposed MADUO, a novel distributed uplink operation for CFmMIMO. In MADUO, each UE is assigned a MAP and a set of ASAPs, which are responsible for its service. The MAP uses its local signals and the signal estimates from the ASAPs for data detection.  MADUO is non-scalable, and we have proposed modifications to make it scalable and obtain \MADUOscl. Both MADUO and \MADUOscl \ perform comparably to the centralized operation, while requiring fewer computations. Moreover, \MADUOscl \ reduces the fronthaul signaling compared to the distributed operation.

\bibliographystyle{IEEEbib}
\bibliography{strings,refs}

\end{document}